\documentclass[twoside,a4paper,12pt]{amsart}

\setlength{\textwidth}{170mm} \setlength{\textheight}{240mm}
\setlength{\parindent}{8mm}
\setlength{\oddsidemargin}{0cm} \setlength{\evensidemargin}{0pt}
\setlength{\marginparwidth}{2.5cm}
\hoffset=-.5truecm
\voffset=-2truecm

\usepackage[english]{babel}
\usepackage{amsmath}
\usepackage{amsthm}
\usepackage{amssymb}
\usepackage{hyperref}
\usepackage{xcolor}
\usepackage{graphicx}


\newtheorem{theorem}{Theorem}

\theoremstyle{remark}
\newtheorem{remark}{Remark}


\def\S{\mathcal S}

\newcommand{\pd}[2]{\frac{\partial {#1}}{\partial {#2}}}
\newcommand{\be}{\begin{equation}}
\newcommand{\ee}{\end{equation}}

\newcommand{\n}{\noindent}
\newcommand{\sgn}{\operatorname{sgn}\,}
\newcommand{\supp}{\operatorname{supp}}
\newcommand{\RE}{\mathbb R}
\newcommand{\CO}{\mathbb C}

\newcommand{\HH}{\mathcal H}


\newcommand{\uz}{\underline{z}}
\newcommand{\uy}{\underline{y}}
\newcommand{\ue}{\underline{e}}

\newcommand{\nn}{^{(n)}}

\newcommand{\comment}[1]{}

\title[]{Effective equation for a system of mechanical oscillators in an acoustic field}

\author[]{Claudio Cacciapuoti}
\address{Hausdorff Center for Mathematics, Institute for Applied Mathematics, Bonn Universit\"at, Endenicher Allee 60, 53115 Bonn, Germany}
\curraddr{Dipartimento di Scienza e Alta Tecnologia, Universit\`a dell'Insubria, Via Valleggio 11, 22100 Como, Italy}
\email{claudio.cacciapuoti@uninsubria.it}

\author[]{Rodolfo Figari}
\address{Dipartimento di Scienze Fisiche, Universit\`a di Napoli Federico II, Napoli, Italy}
\address{Istituto Nazionale di Fisica Nucleare, Sezione di Napoli, Via Cintia, 80126 Napoli, Italy}
\email{figari@na.infn.it}

\author[]{Andrea Posilicano}
\address{Dipartimento di Scienza e Alta Tecnologia, Universit\`a dell'Insubria, Via Valleggio 11, 22100 Como, Italy}
\email{posilicano@uninsubria.it}

\thanks{
C.C. acknowledges the support of the FIR 2013 project ``Condensed Matter in Mathematical Physics'' (code RBFR13NAET)}

\subjclass[2010]{76M50, 35D30, 35M33}
\keywords{Homogenization, point interactions, field-sources interaction}

\begin{document}
\maketitle

\begin{abstract}
We consider a one dimensional evolution problem modeling the dynamics of an acoustic field coupled with a set of mechanical oscillators. We analyze solutions of the system of ordinary and partial differential equations with time-dependent boundary conditions describing the evolution in the limit of a continuous distribution of oscillators. 
\end{abstract}



\section{Introduction}
The investigation of the dynamics of spherical bubbles oscillating in a surrounding compressible fluid has been the subject of many papers in Applied Acoustics. Starting from the pioneering works \cite{C.B, V.B1, V.B2} of C.A. Bjerknes and V.F.K. Bjerknes, at the beginning of last century,  the analysis of the forces (secondary Bjerknes forces) exchanged between bubbles via the self-generated acoustic field was the basis of various theoretical attempt to examine the evolution of ``bubbly liquids''.  Here, we limit ourselves to cite a few review papers  \cite{DZ95, D05, HOP01, P04} and a more recent work  \cite{L12} where a thorough list of references on various aspects of this topic can be found.

\n
The case of spherical bubbles of small linear size  in comparison with their average mutual distances was the first to be analyzed. Under this hypothesis, only the far field emitted by the bubbles is relevant to the interaction forces and the bubbles can be considered point-like. Nevertheless, at the best of our knowledge, no attempts has been done to investigate the equations governing the evolution of the acoustic field and of the bubble dynamics in the limit of zero bubble radii. 

\n
It is worth mentioning that, at the times when the Bjerknes' started their investigations on the acoustic field driven interactions between bubbles, many physicists were approaching the apparently much harder problem of producing a theory of the electromagnetic field together with its point sources (see e.g. \cite{Abr1,  Abr2, Lor1, Lor2, Lor3}).

\n
In the following, we present a one dimensional model of a coupled system of acoustic field and mechanical oscillators. Based on the results we obtained in \cite{CFP06}, for the case of a finite number of oscillators, we address the problem of deducing the asymptotic form of the velocity field evolution equation when the oscillators are continuously distributed in a finite interval.

\n 
In our opinion, the asymptotic problem has reasons of mathematical interest. In fact, the small length scale does not refer here to the typical range of some inhomogeneity of the medium  inside which the field evolves, but to the mutual distances of the sources producing and interacting with the field itself. In this respect, the problem is in no way typical of an homogenization problem for an hyperbolic system of evolution equations (see e.g. \cite{CDMZ,  DFM, GPX, PL, R})

 In fact, we investigate here a scattering process: the limit Cauchy problem  describes the evolution of the acoustic field velocity from one side to the other of a region filled with mechanical oscillators reacting to the field and becoming sources of secondary acoustic waves. At the best of our knowledge rigorous results on limit effective equations for problems of this kind are not available.

\n We describe first the physical setup our system of equations is meant to model. We consider the following arrangement: an infinite pipe  filled with a non-viscous, compressible fluid and $n$ mechanical oscillators  realized with thin walls of mass $M_j$ positioned  in the  pipe perpendicularly to  its axis.  A system of springs with 
elastic constants $K_j$ confines the walls around their equilibrium positions 
located in the points of coordinates $s_j$, with respect to a system of coordinates whose 
$x$-axis coincides with the axis of the pipe.

\n We suppose that there is no friction between the fluid and the
pipe and we analyze only one dimensional flows. 
The acoustic  field is then described by the  pressure field $p(x,t)$
and  the  velocity  field  $v(x,t)$. The  motion  of  the  mechanical
oscillators is described through the displacements $y_j(t)$ of the thin walls from their
equilibrium positions $s_j$ and the velocities
$z_j(t)=\dot y_j(t)$, $j=1,\dots,n$. 

\begin{center}
\begin{figure}[h!]
\includegraphics[scale=1]{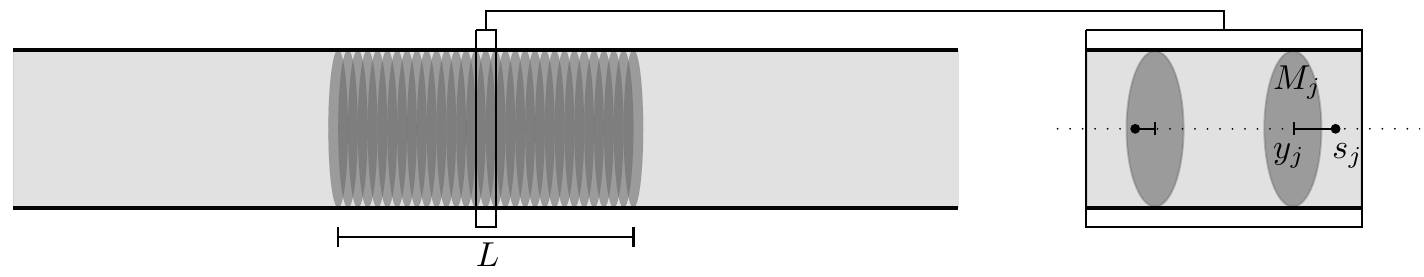}
\caption{The figure shows the cross section of the whole system (on the left) and the details of two oscillators (on the right).}
\end{figure}
\end{center}

\n The field $p(x,t)$ specifies the deviation of the pressure in the
point $x$ at time $t$ with respect 
to an equilibrium pressure $P_0$. We analyze the dynamics of the whole
system (acoustic field and mechanical oscillators) in the linearized
acoustic and elastic regimes.  Moreover  we consider  walls  of  zero
  thickness and we suppose that the contact between the fluid and the walls 
  is maintained throughout the evolution.
  
\n  Under  these
hypotheses the  dynamics is described  by the following system  of ordinary
and  partial  differential  equations  with  time  dependent  boundary
conditions
\begin{align}
&\pd{p}{t}=-a^2\rho_0\pd{v}{x}& x\in\RE\backslash\S\label{globeq1}\\
&\pd{v}{t}=-\frac{1}{\rho_0}\pd{p}{x}& x\in\RE\backslash\S \label{globeq2}\\
&M_j\frac{dz_j}{dt}=-K_jy_j-S\left(p(s_j^+)-p(s_j^-)\right)&j=1,\dots,n \label{globeq3}\\
&z_j=\frac{dy_j}{dt}=v(s_j)&j=1,\dots,n\label{globeq4}
\end{align}
\n where $\rho_0$ is  a positive constant representing the 
density  of  the  fluid, $a$  is  a  positive  constant
indicating the
 velocity of sound in the fluid, $S$ is the area of the transversal
 section of the pipe and $\S:=(s_1,\dots,s_n)$ is the set of the equilibrium 
 positions of the walls. 
 Equations \eqref{globeq3} and \eqref{globeq4} express the Newton's laws for the mechanical oscillators. The two forces acting on each wall are the spring elastic force and the force due to the difference of pressure on its opposite sides.
 
 \n Aim of this paper is to characterize the effective behavior of the system 
in the limit of a continuous  distribution of oscillators in a bounded region.
 
\n Given the initial conditions
\be\label{inco}
p(x,0);\;v(x,0);\; y_j(0);\;z_j(0)\qquad j=1,\dots,n
\ee
\n we consider the
Cauchy problem (\ref{globeq1})-(\ref{globeq4}), (\ref{inco}).

\n  As it  was shown  in \cite{CFP06}  existence and  unicity,  together with detailed 
properties of the solutions, can  be proved formulating the  problem as a unitary
evolution in the space of finite energy. In the following, we fix the notation and we recall the
main result obtained in \cite{CFP06}.

\n We indicate with a capital Greek letter the generic vector
$\Psi := (p,v,\uy,\uz)\in L^2(\RE)\oplus L^2(\RE)\oplus\CO^n\oplus\CO^n\,,$
where $L^2(\RE)$ is the space of square-integrable functions on
the real line,
\[
\uy=y_1\ue_1+\dots+y_n\ue_n\,,\quad 
\uz=z_1\ue_1+\dots+ z_n\ue_n 
\] 
\n and $\ue_1,\dots,\ue_n$ is the canonical
orthonormal base in $\CO^n$. 

\n The Hilbert space of finite energy is defined by 
\[
 \HH:=L^2(\RE)\oplus L^2(\RE)\oplus\CO^n\oplus\CO^n
\]
with the scalar product 
\[
\langle\langle\Psi_1,\Psi_2\rangle\rangle:=
\frac{1}{a^2\rho_0}\,\langle p_1,p_2\rangle+\rho_0
\langle v_1,v_2\rangle+\frac{1}{S}\sum_{j=1}^{n}\big[{K_j}\, \bar y_{1j}
y_{2j}
+{M_j}\,\bar z_{1j}z_{2j}\big]\,,
\]
where  $\langle\cdot\,,\cdot\rangle$  indicates  the  standard scalar  product  in
$L^2(\RE)$ and $^-$ denotes 
complex conjugation.
 
\n      The     square      norm     of      a      
vector     $\Psi$,
$\|\Psi\|^2=\langle\langle\Psi,\Psi\rangle\rangle$, 
defines the total energy of the system in the state $\Psi$
\[
E_{tot}=\frac{S}{2}\|\Psi\|^2=E_{ac}+E_{osc}\,,
\]
\n with
\begin{align*}
E_{ac}&=\frac{S}{2a^2\rho_0}\int_{-\infty}^\infty
|p(x)|^2dx+\frac{S\rho_0}{2}\int_{-\infty}^\infty
|v(x)|^2dx\\
E_{osc}&=\sum_{j=1}^n\bigg[\frac{K_j}{2}|y_j|^2+\frac{M_j}{2}|z_j|^2\bigg]\,,
\end{align*}
\n $E_{ac}$ is the energy stored in the acoustic field while $E_{osc}$
is the energy of the mechanical oscillators.

\n Let $\bar H^1(\RE)$ be the homogeneous Sobolev space of
locally square-integrable functions with square-integrable (distributional) 
derivative and $H^1(\RE)$ the usual Sobolev space $H^1(\RE)
:=\bar H^1(\RE)\cap
L^2(\RE)$. Similarly we define, for any open set $\mathcal O\subseteq\RE$, the Sobolev spaces 
$H^1(\mathcal O):=\{f\in L^2(\mathcal O)\,:\, \frac{df}{dx}\in L^2(\mathcal O)\}$ and 
$H^2(\mathcal O):=\{f\in H^{1}(\mathcal O)\,:\, \frac{d^2f}{dx^2}\in L^2(\mathcal O)\}$. 
\n The following theorem holds
\begin{theorem}
\label{th1}
The linear operator
\[
\hat A:D(\hat A)\subset\HH\to\HH
\]
\[
\begin{aligned}
D(\hat A)
:=\big\{&(p,v,\uy,\uz)\,\big|\, p\in L^2(\RE)\cap H^1(\RE\backslash \S),\, v\in
H^1(\RE),\\
&\  \uy\in\CO^n,\, \uz\in\CO^n,\, 
v(s_j)=z_j,\,j=1,\dots,n\big\}
\end{aligned}
\]
\[
\hat A(p,v,\uy,\uz):=\left(-a^2\rho_0\,\frac{dv}{dx},\,
-\frac{1}{\rho_0}\,\frac{dp_{reg}}{dx},\,\uz,\,
-\sum_{j=1}^{n}\left(\frac{K_j}{M_j}\,y_j+\frac{S}{M_j}\,\sigma_j\right)\,\ue_j\right)
\]
is real and skew-adjoint. Here $\sigma_j\in \CO$ is defined by
\[
\sigma_j:=p(s_j^+)-p(s_j^-)
\]
and $p_{reg}\in \bar H^1(\RE)$ 
is the regular part of $p$, i.e.  
\[
p_{reg}(x):=p(x)-\frac{1}{2}\sum_{j=1}^{n}\sigma_j\,
\sgn(x-s_j)\,.
\]
\end{theorem}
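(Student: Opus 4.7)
The plan is to establish (i) reality, (ii) skew-symmetry of $\hat A$ on $D(\hat A)$, and (iii) a range condition implying that the inclusion $\hat A\subset-\hat A^*$ is actually an equality. Reality is essentially by inspection: every coefficient in the formula for $\hat A$ is real, the jump functional $\sigma_j$ depends real-linearly on $p$, and the constraint $v(s_j)=z_j$ is real, so $\hat A$ commutes with pointwise complex conjugation in each component.

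For skew-symmetry I would compute $\langle\langle\hat A\Psi_1,\Psi_2\rangle\rangle+\langle\langle\Psi_1,\hat A\Psi_2\rangle\rangle$ for $\Psi_1,\Psi_2\in D(\hat A)$ and show it vanishes. The oscillator block produces a quadratic form in $y,z,\sigma$; its $K_j$-part cancels by antisymmetry and leaves the residue $-\sum_j(\bar\sigma_{1j}z_{2j}+\bar z_{1j}\sigma_{2j})$. The fluid block is $-\langle v_1',p_2\rangle-\langle p_1,v_2'\rangle-\langle p_{1,reg}',v_2\rangle-\langle v_1,p_{2,reg}'\rangle$. Splitting $p_i=p_{i,reg}+\tfrac{1}{2}\sum_j\sigma_{ij}\,\sgn(\cdot-s_j)$ and integrating by parts (boundary terms at $\pm\infty$ vanish because $v_i\in H^1(\RE)\subset C_0(\RE)$), the $p_{i,reg}$ contributions cancel pairwise, while the sign-function contributions produce exactly $\sum_j(\bar\sigma_{1j}v_2(s_j)+\sigma_{2j}\bar v_1(s_j))$. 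Using the domain constraint $v_i(s_j)=z_{ij}$, this precisely cancels the oscillator residue.

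To upgrade to skew-adjointness I would show that for some $\lambda\in\CO$ with $\operatorname{Re}\lambda\neq 0$ the operator $\hat A-\lambda\II$ is surjective. Given $\Phi=(\tilde p,\tilde v,\tilde\uy,\tilde\uz)\in\HH$, the last two components of $(\hat A-\lambda)\Psi=\Phi$ determine $y_j,z_j$ algebraically in terms of $\sigma_j$ and the data. The first two components form a first-order linear $2\times 2$ ODE system for $(p,v)$ on each connected component of $\RE\setminus\S$, with homogeneous modes $e^{\pm\lambda x/a}$ and an explicit particular solution built from the $L^2$-Green's function of the free wave operator. On each of the two unbounded components one selects the branch decaying at infinity, and the remaining free constants on the $n-1$ internal intervals, together with the $n$ unknowns $\sigma_j$, are fixed by the matching at each $s_j$ (continuity of $v$ with prescribed value $z_j$, and $p(s_j^+)-p(s_j^-)=\sigma_j$). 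This produces a finite linear system whose invertibility for $\operatorname{Re}\lambda\neq 0$ follows from the a priori bound $\|(\hat A-\lambda)\Psi\|\ge|\operatorname{Re}\lambda|\,\|\Psi\|$ supplied by the skew-symmetry of step (ii). Surjectivity of $\hat A\pm\II$ combined with skew-symmetry is the standard criterion that yields $\hat A=-\hat A^*$.

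The main technical obstacle is the low regularity of the data: since $\tilde p$ is only in $L^2(\RE)$, one cannot differentiate it pointwise, and the naive reduction to a second-order ODE for $v$ alone would involve $\tilde p'$ as a distribution. I would therefore retain the first-order formulation throughout the construction, and verify \emph{a posteriori} that the resulting $\Psi$ belongs to $D(\hat A)$---specifically that $p\in L^2(\RE)\cap H^1(\RE\setminus\S)$, that $v\in H^1(\RE)$, and that the coupling $v(s_j)=z_j$ holds---so that no spurious self-adjoint boundary conditions are introduced and the domain matches exactly the one stated in the theorem.
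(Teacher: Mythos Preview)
The paper does not actually prove Theorem~1: it is quoted from \cite{CFP06}, and the closing paragraph indicates that the argument there proceeds via the abstract theory of self-adjoint extensions of symmetric operators. So there is no in-paper proof to compare against line by line.

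Your proposal is nonetheless a correct and self-contained route. The reality check and the skew-symmetry computation are right; the only cosmetic point is that your phrase ``splitting $p_i=p_{i,reg}+\tfrac12\sum_j\sigma_{ij}\sgn(\cdot-s_j)$'' inside the $L^2$ pairings is formally problematic because neither $p_{i,reg}$ nor $\sgn(\cdot-s_j)$ lies in $L^2(\RE)$ separately. What actually makes the cancellation work is integrating by parts on each interval $(s_j,s_{j+1})$ (with $s_0=-\infty$, $s_{n+1}=+\infty$), using that $p'_{reg}=p'$ a.e.\ on $\RE\setminus\S$, that $v\in H^1(\RE)\hookrightarrow C_0(\RE)$ is continuous across each $s_j$, and that $p\in L^2\cap H^1(s_n,\infty)$ forces $p(x)\to 0$ at infinity; the interior boundary terms then produce exactly $\sum_j(\bar\sigma_{1j}v_2(s_j)+\bar v_1(s_j)\sigma_{2j})$, which cancels the oscillator residue once $v_i(s_j)=z_{ij}$ is invoked. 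Your step (iii) is the standard range criterion and the bookkeeping (two free constants per bounded interval, one decaying branch on each half-line, plus the unknowns $\sigma_j$, matched against continuity of $v$, the constraint $v(s_j)=z_j$, and the jump condition for $p$) is correct; taking $\lambda$ real avoids the resonance $\lambda^2=-K_j/M_j$.

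Compared with the extension-theoretic approach alluded to in \cite{CFP06}, your argument is more elementary and hands-on: it avoids identifying $\hat A$ as a particular point in a family of extensions of a symmetric restriction, at the cost of an explicit (if routine) ODE construction for the resolvent. The extension-theory route, by contrast, gives the whole family of admissible boundary conditions at once and makes the structure of $D([\hat A]^2)$ (used later in the proof of Theorem~2) more transparent.
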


\n
\begin{remark}
Notice that the left and right limits entering in the 
definitions of the $\sigma_j$'s exist and are finite by 
\[
|p(x_1)-p(x_2)|\le (x_2-x_1)^{1/2}\,\left\|\frac{dp}{dx}
\right\|_{L^2(s_j,s_{j+1})}\,,\quad s_j<x_1\le x_2<s_{j+1}\,.
\]
\end{remark}
\n Theorem \ref{th1} implies that the Cauchy problem 
\begin{align}
&\frac{d}{dt}\Psi_t=\hat A\Psi_t\label{caop1}\\
&\Psi_{t=0}=\Psi_0\qquad\Psi_0\in\HH\label{caop2}
\end{align}
\n has a unique solution expressed via the strongly continuous
unitary group  of evolution generated  by $\hat A$,  
$\Psi_t=e^{t\hat A
 }\Psi_0$.  Moreover the unitary evolution preserves reality in such a way that one  
 can consider the flow restricted to the real Hilbert space 
\[
\HH_r:=L_r^2(\RE)\oplus L_r^2(\RE)\oplus\RE^n\oplus\RE^n\,,
\]
\n where $L^2_r(\RE)$ is the space of real-valued functions in
$L^2(\RE)$. The operator $\hat A$ is skew-adjoint so that 
$\|\Psi_t\|=\|\Psi_0\|$, 
and  the  total energy  remains
constant.   It  is  easy to check that   the unique solution of
(\ref{caop1})-(\ref{caop2}) solves in $ \HH$ 
the  Cauchy problem (\ref{globeq1})-(\ref{globeq4}),  (\ref{inco}). 

\n In fact  from Theorem \ref{th1} the equation (\ref{caop1}) is equivalent to the system
\begin{align*}
&\pd{p}{t}=-a^2\rho_0\,\pd{v}{x}\,,
\\
&\pd{v}{t}=-\frac{1}{\rho_0}\,\pd{p_{reg}}{x}\,,
\\
&\frac{dy_j}{dt}=z_j\,,\quad j=1,\dots,n
\,,\\
&\frac{dz_j}{dt}=-\left(\frac{K_j}{M_j}\,y_j+\frac{S}{M_j}\,\sigma_j\right)\,,\quad j=1,\dots,n\,.
\end{align*}
\n Notice that if $\Psi_0\in D(\hat A)$ the solution $\Psi_t$ 
belongs to $D(\hat A)$ so that condition (\ref{globeq4}) is  automatically satisfied at any time.

\n
In the following section we investigate the effective response to an incoming acoustical wave 
of a chain of mechanical oscillators in the limit of a continuous distribution of oscillator masses.

\section{Effective equation in the limit of a continuous distribution of oscillators}
 We consider a sequence of Cauchy problems indexed by the increasing number $(n)$ of mechanical oscillators
\begin{align} 
&\frac{d}{dt}\Psi_t\nn=\hat A\nn\Psi_t\nn
 \label{cpn1}
\\
&\Psi_{t=0}\nn=\Psi_0\nn\qquad\Psi_0\nn\in\HH
\label{cpn2}
\end{align}
where we denoted, as above, with $\Psi_t\nn$ the quadruple $\Psi_t\nn := (p_t\nn,v_t\nn,\uy_t\nn,\uz_t\nn)$. We investigate the limit, for increasing $n$, of the evolution equation satisfied by the velocity field alone.

\n
We will make the following assumptions on masses, elastic constants and positions of the walls: 
\begin{itemize}
\item[$\mathcal{A}_1)$] The set $\S^{(n)}:=(s_1^{(n)},\dots,s_n^{(n)})$ of the equilibrium positions is contained in a bounded 
interval around the origin, $s_j^{(n)} \in [-L/2,L/2] \;\; 
\forall n$ and $\forall j=1\dots n$;
\item[$\mathcal{A}_2)$] The two measures $\mu_M^{(n)}:=\sum_{j=1}^{n} M_j^{(n)} \delta_{s_j^{(n)}}$ and 
$\mu_K^{(n)}:=\sum_{j=1}^n K_j^{(n)} \delta_{s_j^{(n)}}$, 
where $\delta_{s_j^{(n)}}$ is the Dirac mass supported at point
$s_j^{(n)}$,  are uniformly equivalent in the sense that
$0<c_1<\sup\big[L^2 K_j^{(n)}/ a^2 M_j^{(n)}\big]<c_2<\infty$,
 here $c_1$ and $c_2$ are two positive constants. The two measures have total mass which is bounded uniformly in $n$ :  $\sum_{j=1}^{n} M_j^{(n)}<c\;\;\;$ $\sum_{j=1}^{n} K_j^{(n)}<c$. Furthermore $\mu_M^{(n)}$ and $\mu_K^{(n)}$ 
weakly converge, when $n$ tends to infinity, to measures supported in $[-L/2,L/2]$  which are absolutely continuous with respect to Lebesgue measure (the corresponding Radon-Nikodim derivatives will be denoted with 
$S \rho_M$ and $S \rho_K$);
\end{itemize}

\n
The initial conditions will be chosen independent of $n$. In particular we consider the case
\be\label{inco2}
p^{(n)}(x,0)=p_0(x);\;v^{(n)}(x,0)=v_0(x);\; y_j^{(n)}(0)=0;\;z_j^{(n)}(0)=0\qquad j=1,\dots,n\,,
\ee
\n where $p_0$ and $v_0$ are in $H^2(\RE)$ and $\supp(p_0)\cap[-L/2,L/2]=\emptyset$, $\supp(v_0)\cap[-L/2,L/2]=\emptyset$.
\begin{remark}
The request of equivalence of the two measures $\mu_M^{(n)}$ and 
$\mu_K^{(n)}$ could be released in various ways but it is well motivated, from a physical 
point of view, preventing the occurrence of zero or infinite proper oscillation frequencies. 
\end{remark}

\n Initial conditions more general than  (\ref{inco2}) might of course be considered. Nevertheless 
(\ref{inco2}) seem to be  the most natural initial  conditions which do not depend on $n$. Notice, in particular, 
that at each step $n$, corresponding to the zero eigenvalue of $\hat A^{(n)}$, there are static solutions 
of problem (1)-(4). As it was shown in \cite{CFP06} the static solutions are of the form  $\Psi_{st}=(p_{st}\nn,0,\uy_{st},\underline0)$ with $\supp(p_{st})\subseteq[-L/2,L/2]$.  Being initial conditions  (\ref{inco2}) orthogonal to all 
static solutions we are guaranteed that a genuine scattering problem is analyzed.

\n Under the assumptions stated above we prove the following
\begin{theorem} 
Let the sequence of sets $\S^{(n)}:=(s_1^{(n)},\dots,s_n^{(n)})$ and the sets of positive real numbers 
$\underline{M}^{(n)}:=(M_1^{(n)},\dots,M_n^{(n)})$ and $\underline{K}^{(n)}:=(K_1^{(n)},\dots,K_n^{(n)})$ 
be such that the assumptions $\mathcal{A}_1)$ and $\mathcal{A}_2)$ above are satisfied.
Let $v^{(n)}\in C^2(\RE;L^2(\RE))\cap C^1(\RE;H^1(\RE))\cap 
C(\RE;H^2(\RE\backslash\S\nn)) $ 
be  the velocity field in the solution of problem \eqref{cpn1} - \eqref{cpn2} with initial conditions 
satisfying assumption $\mathcal{A}_3)$.
Then  
$$
\text{$v^{(n)}$ and $\frac{\partial v^{(n)}}{\partial t}$ converge to $v$ and $\frac{\partial v}
{\partial t}$ uniformly over compact subsets of $\RE^2$}
$$
and
$$
\text{$\frac{
  \partial v^{(n)}}{\partial x}(t,\cdot )$ converges to $\frac{ \partial v}{\partial x}(t,\cdot )$ in $L_{loc}^2(\RE)$,
uniformly in $t$ over compact intervals,}
$$
where $v\in C^{1}(\RE;C(\RE))\cap C(\RE;H^{1}_{loc}(\RE))$ 
is the unique strong solution\footnote{see equation \eqref{strong} below for more precise regularity properties of $v$.} of the hyperbolic equation
\be 
\label{hypeq}
\Big(1+\frac{\rho_M}{\rho_0}\Big)\, \pd{^2v}{t^2}-
a^2\, \pd{^2v}{x^2}
+\frac{\rho_K}{\rho_0}\,v=0.
\ee
with initial data
\be\label{hypeq2}
v|_{t=0}=v_0\,;\qquad
\pd{v}{t}\bigg|_{t=0}=-\frac{1}{\rho_0}\frac{dp_0}{dx}\,.
\ee
\end{theorem}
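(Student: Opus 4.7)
\textbf{Reduction to an equation for $v^{(n)}$.} My plan is to derive a single distributional equation for $v^{(n)}$ alone, obtain uniform estimates, pass to the limit in a weak formulation, and close with uniqueness. First I eliminate $p$, $\uy^{(n)}$ and $\uz^{(n)}$. Differentiating \eqref{globeq2} in $t$, using \eqref{globeq1} and the decomposition $p=p_{reg}+\frac{1}{2}\sum_j\sigma_j\sgn(x-s_j)$, together with \eqref{globeq3}--\eqref{globeq4} and the zero initial data for $\uy^{(n)},\uz^{(n)}$ to write $\sigma_j^{(n)}(t)=-(M_j^{(n)}/S)\,\partial_t v^{(n)}(s_j^{(n)},t)-(K_j^{(n)}/S)\!\int_0^t\! v^{(n)}(s_j^{(n)},\tau)\,d\tau$, one obtains that $\partial_t^2 v^{(n)}=a^2\partial_x^2 v^{(n)}$ classically on $\RE\setminus\S^{(n)}$, with continuity at each $s_j^{(n)}$ and the jump
\[
a^2\rho_0 S\,[\partial_x v^{(n)}]_{s_j^{(n)}} = M_j^{(n)}\,\partial_t^2 v^{(n)}(s_j^{(n)},t)+K_j^{(n)}\, v^{(n)}(s_j^{(n)},t).
\]
Testing against $\phi\in C_c^\infty(\RE^2)$ and integrating by parts twice in $x$ (which produces the jump terms) and twice in $t$ (shifting $\partial_t^2$ onto $\phi$; boundary terms at $t=0$ vanish because $v^{(n)}(s_j^{(n)},0)=\partial_t v^{(n)}(s_j^{(n)},0)=0$, itself a consequence of $\supp(v_0)\cup\supp(p_0)$ being disjoint from $[-L/2,L/2]$) yields the weak form
\[
\iint v^{(n)}\!\left[\partial_t^2\phi-a^2\partial_x^2\phi\right]dx\,dt+\frac{1}{\rho_0 S}\iint v^{(n)}\!\left[\partial_t^2\phi\,d\mu_M^{(n)}+\phi\,d\mu_K^{(n)}\right]dt=0.
\]
Under $\mathcal A_2$, the formal $n\to\infty$ limit is the weak form of \eqref{hypeq}.

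\textbf{Uniform estimates and compactness.} The support hypothesis on $p_0,v_0$ places $\Psi_0^{(n)}$ in $D((\hat A^{(n)})^2)$ uniformly in $n$: $v_0$ vanishes on $[-L/2,L/2]$ so $v_0(s_j^{(n)})=0=z_j^{(n)}(0)$, and $p_0\in H^2(\RE)$ gives the second-order compatibility. Theorem \ref{th1} then yields conservation of $\|\Psi_t^{(n)}\|$, $\|\hat A^{(n)}\Psi_t^{(n)}\|$, and $\|(\hat A^{(n)})^2\Psi_t^{(n)}\|$; combined with the uniform total-mass bounds of $\mathcal A_2$, these translate into uniform bounds on $v^{(n)},\partial_t v^{(n)}$ in $L^\infty(\RE;H^1(\RE))$ and on $\partial_t^2 v^{(n)}$ in $L^\infty(\RE;L^2(\RE))$. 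The Sobolev embedding $H^1\hookrightarrow C^{0,1/2}$ together with the time-derivative bounds produces equicontinuity of $v^{(n)},\partial_t v^{(n)}$ on compacts of $\RE^2$. Arzel\`a--Ascoli then extracts a subsequence (not relabeled) along which $v^{(n)}\to v$ and $\partial_t v^{(n)}\to\partial_t v$ locally uniformly on $\RE^2$, and $\partial_x v^{(n)}(t,\cdot)\to\partial_x v(t,\cdot)$ in $L^2_{loc}(\RE)$ uniformly in $t$ on compact intervals.

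\textbf{Passage to the limit, uniqueness, and main obstacle.} In the weak formulation, the Lebesgue integral converges by the uniform convergence of $v^{(n)}$ on $\supp\phi$. For each measure integral, $v^{(n)}\,\partial_t^2\phi$ (respectively $v^{(n)}\phi$) converges uniformly on $[-L/2,L/2]\times\supp_t\phi$; combined with $\mu_M^{(n)}/S\rightharpoonup\rho_M\,dx$ and $\mu_K^{(n)}/S\rightharpoonup\rho_K\,dx$ from $\mathcal A_2$ and uniform boundedness of the $t$-integrand on $\supp_t\phi$, dominated convergence gives
\[
\iint v^{(n)}\,\partial_t^2\phi\,d\mu_M^{(n)}dt\to S\iint v\,\partial_t^2\phi\,\rho_M\,dx\,dt,
\]
and similarly for $\mu_K^{(n)}$. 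The limit $v$ therefore satisfies the weak form of \eqref{hypeq}, with \eqref{hypeq2} matched via uniform convergence at $t=0$. Uniqueness of the strong solution of \eqref{hypeq}--\eqref{hypeq2} follows from the standard energy identity (multiplier $\partial_t v$, using $1+\rho_M/\rho_0\geq 1$ and $\rho_K/\rho_0\geq 0$), so the whole sequence converges. The main obstacle is precisely this last passage to the limit: weak convergence of measures pairs only with continuous functions, while $\partial_t^2 v^{(n)}$ is merely $L^2$ in $x$. The decisive device is the double integration by parts in $t$ to shift the time derivatives onto the smooth $\phi$, justified by the vanishing compatibility conditions that the support hypothesis on the initial data secures.
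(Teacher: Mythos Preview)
Your overall architecture matches the paper's: uniform a priori bounds from conservation of $\|\Psi\|$, $\|\hat A\nn\Psi\|$, $\|[\hat A\nn]^2\Psi\|$, compactness, passage to the limit in a weak form, then uniqueness for the limit problem to upgrade from subsequences to the full sequence. Your double integration by parts in $t$ (so that only $v\nn$ itself is paired against the measures $\mu_M\nn,\mu_K\nn$) is a genuinely nice variant of the paper's formulation, where only one integration by parts in $t$ is performed and therefore the uniform convergence of $\partial_t v\nn$ is actually used to handle the $\mu_M\nn$ term. Your choice makes the measure passage cleaner.

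There is, however, a real gap. You write that Arzel\`a--Ascoli extracts a subsequence along which $\partial_x v\nn(t,\cdot)\to\partial_x v(t,\cdot)$ in $L^2_{loc}(\RE)$, uniformly in $t$ on compacts. This convergence is part of the theorem's conclusion, not just a tool, and it does \emph{not} follow from Arzel\`a--Ascoli nor from the bounds you have listed. The difficulty is that $\partial_x v\nn(t,\cdot)$ is discontinuous at every $s_j\nn$: it belongs to $H^1(\RE\backslash\S\nn)$ but not to $H^1(\RE)$, and the number of jumps grows with $n$, so you have no uniform $H^1$ control. Uniform $L^2$ bounds together with $\partial_t\partial_x v\nn$ bounded in $L^\infty_t L^2_x$ give at best weak $L^2$ convergence (along a subsequence), not strong. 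The paper closes this gap by splitting $\partial_x v\nn=(\partial_x v\nn)_{reg}+\tfrac12\sum_j\zeta_j\nn\sgn(\cdot-s_j\nn)$, observing that the regular part is uniformly bounded in $H^1$ while the singular part is uniformly bounded in $H^\alpha(I)$ for any $\alpha<\tfrac12$ (since $\sgn\in H^\alpha_{loc}$ and the bound \eqref{bound4} controls $\sum_j M_j\nn|\zeta_j\nn|^2$, hence $\sum_j|\zeta_j\nn|$ via Cauchy--Schwarz and $\mathcal A_2$). One then applies an Aubin--Lions--Simon argument with the compact embedding $H^\alpha(I)\hookrightarrow L^2(I)$ to obtain convergence in $C(\bar I_1;L^2(I_2))$. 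You need to supply this (or an equivalent) argument.

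A smaller point: your invocation of Arzel\`a--Ascoli for $\partial_t v\nn$ requires equicontinuity in $t$ in the sup norm, but you only have $\partial_t^2 v\nn$ bounded in $L^\infty_t L^2_x$. This is repairable by the one-dimensional interpolation $\|f\|_{L^\infty}\le C\|f\|_{L^2}^{1/2}\|f\|_{H^1}^{1/2}$, which gives $\|\partial_t v\nn(t_1,\cdot)-\partial_t v\nn(t_2,\cdot)\|_{L^\infty}\le C|t_1-t_2|^{1/2}$; the paper instead uses the Aubin--Lions--Simon lemma with the triple $H^1\subset C\subset L^2$. Either route works, but one of them should be stated.
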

\begin{proof}
By the basic properties of one-parameter 
groups of operators and their
generators (see e.g. \cite{G}), 
the map $t\mapsto e^{t\hat A^{(n)}}\Psi_0$ belongs to 
$C^1(\RE;\HH)\cap C(\RE;D(\hat A^{(n)}))$ if and only 
if $\Psi_0\in D(\hat A^{(n)})$. In this case 
\[
\frac{d\,}{dt}\,e^{t\hat A^{(n)}}\Psi_0=\hat A^{(n)}e^{t\hat A^{(n)}}
\Psi_0=
e^{t\hat A^{(n)}}\hat A^{(n)}\Psi_0\,,
\]
and, since $\hat A^{(n)}$ is skew-adjoint,
\be\label{sa1}
\|\hat A^{(n)}
\Psi_t\nn\|=\|e^{t\hat A^{(n)}}\hat A^{(n)}\Psi_0\|=
\|\hat A^{(n)}
\Psi_0\|\,.
\ee
Therefore, since any initial condition
$\Psi_0=(p_0,v_0,\underline0,\underline0)$ chosen according to
assumption $\mathcal{A}_3)$ is in $D\big([\hat A^{(n)}]^2\big)$ 
for any $n$, the path $t\mapsto 
\Psi_t\nn:=e^{t\hat A\nn }\Psi_0$ is in 
$C^2(\RE;\HH)\cap C^1(\RE;D(\hat A^{(n)}))\cap 
C(\RE;D([\hat A^{(n)}]^2))$, 
\be\label{d2}
\frac{d^2\Psi_t\nn}{dt^2}=[\hat A^{(n)}]^2
\Psi_t\nn
\ee
and
\be\label{sa2}
\|[\hat A^{(n)}]^2
\Psi_t\nn\|=\|e^{t\hat A^{(n)}}[\hat A^{(n)}]^2\Psi_0\|=
\|[\hat A^{(n)}]^2
\Psi_0\|\,.
\ee
\n From the explicit characterization of domain and action of $[A\nn]^2$
\[ 
\begin{aligned}
&D([\hat A\nn]^2)
=\bigg\{(p\nn,v\nn,\uy\nn,\uz\nn)\in D(A\nn)\bigg|\, 
\frac{dv\nn}{dx}\in H^1(\RE\backslash \S\nn),\\
 &\frac{dp_{reg}\nn}{dx}\in
H^1(\RE),\,
\frac{dp_{reg}\nn}{dx}(s_j\nn)
=\rho_0\bigg(\frac{K_j\nn}{M_j\nn}y_j\nn+\frac{S}{M_j\nn}\sigma_j\nn\bigg),\,j=1,\dots,n\bigg\}\,,
\end{aligned}
\]
\[
\begin{aligned}
&[\hat A\nn]^2(p\nn,v\nn,\uy\nn,\uz\nn)=\Bigg(a^2\frac{d^2p_{reg}\nn}{dx^2},\,
a^2\frac{d}{dx}\bigg(\frac{dv\nn}{dx}\bigg)_{reg},\\
&-\sum_{j=1}^n\bigg(\frac{K_j\nn}{M_j\nn}\,y_j\nn
+\frac{S}{M_j\nn}\,\sigma_j\nn\bigg)\,\ue_j\nn,\,
-\sum_{j=1}^{n}\bigg(\frac{K_j\nn}{M_j\nn}\,z_j\nn-\frac{a^2\rho_0S}{M_j\nn}\,\zeta_j\nn\bigg)\,\ue_j\nn\Bigg)
\end{aligned}
\]
where
\[
\zeta_j\nn:= \frac{dv\nn}{dx}(s_j^{(n)+})-\frac{dv\nn}{dx}(s_j^{(n)-}),
\]
\[
\sigma_j\nn:=p\nn(s_j^{(n)+})-p\nn(s_j^{(n)-}),
\]
\be\label{dervreg}
\bigg(\frac{dv\nn}{dx}\bigg)_{reg}(x):=\frac{dv\nn}{dx}(x)-\frac{1}{2}\sum_{j=1}^n\zeta_j\nn\sgn(x-s_j\nn)\,.
\ee
\n  We conclude that if $\Psi\nn_t=(p\nn(t),v\nn(t),
\uy\nn(t),\uz\nn(t))$ is the solution of \eqref{d2}, then
$$
v\nn\in C^2(\RE;L^2(\RE))\cap C^1(\RE;H^1(\RE))\cap 
C(\RE;H^2(\RE\backslash\S\nn))\,,\quad \uz\in C^2(\RE)\,,
$$
and $(v\nn,\uz\nn)$ is the strong solution of the following 
Cauchy problem with time-dependent boundary conditions:
\begin{align}
&\frac{\partial^2v\nn}{\partial t^2}=a^2\frac{\partial}{\partial x}
\bigg(\frac{\partial v\nn}{\partial x}
\bigg)_{reg}
\,,\label{tilde}\\
&\frac{d^2z_j\nn}{dt^2}+\frac{K_j\nn}{M_j\nn}\,z_j\nn=
\frac{a^2\rho_0S}{M_j\nn}\,\zeta_j\nn\,,\quad j=1,\dots,n\,,\\
&v\nn(t,s_j\nn)=z_j\nn(t)\,,\quad \frac{\partial v\nn}{\partial t}(t,s_j\nn)=
\frac{dz_j\nn}{dt}(t)\label{bc}
\,,\quad j=1,\dots,n\,,\\
&v\nn\big|_{t=0}=v_0\,,\quad \frac{\partial v\nn}{\partial t}\bigg|_{t=0}
=-\frac{1}{\rho_0}\frac{\partial p_0}{\partial x}\,,\quad \\
&\uz\nn\big|_{t=0}=0\,,\quad
\frac{d\uz\nn}{dt}\bigg|_{t=0}=0\,.
\end{align}
\n Furthermore, by energy conservation and by \eqref{sa1} and \eqref{sa2}, there exists  a positive constant $c$
depending only on the norm of the initial datum $\Psi_0=(p_0,v_0,\underline0,\underline0)$, such that 
the following bounds hold for any $t$ and $n$:
\be
\left\|v\nn(t,\cdot)\right\|_{H^1(\RE)}\le  c\,,\quad
\left\|\frac{\partial v\nn}{\partial t} (t,\cdot )\right\|_{H^1(\RE)} \le c\,,\quad
\left\|\frac{\partial^2v\nn}{\partial t^2}(t,\cdot) \right\|_{L^2(\RE)}\le 
c\,,\label{derti} 
\ee
\be\label{reg}
\left\|\bigg(\frac{\partial v\nn}{\partial x}\bigg)_{reg}(t,\cdot)\right\|_{H^1(\RE)}\le  c\,,
\ee
\be
\sum_{j=1}^{n}M_j\nn|z_j\nn(t)|^2\le c\,,\quad
\sum_{j=1}^{n}M_j\nn\bigg|\frac{K_j\nn}{M_j\nn}\,z_j\nn(t)-\frac{a^2\rho_0S}{M_j\nn}\,\zeta_j\nn(t)\bigg|^2\le c\,.
\label{bound4}
\ee
By the continuous injection $H^1(\RE)\hookrightarrow C_b(\RE)$ one has 
$$
\sup_{(t,x)\in\RE^2}|v\nn(t,x)|\le c_1\sup_{t\in\RE}\left\|v\nn(t,\cdot)\right\|_{H^1(\RE)}
$$
and (here $x_1<x_2$ and $t_1<t_2$)
\begin{align*}
|v\nn(t_1,x_1)-v\nn(t_2,x_2)|
\le &\sqrt{x_2-x_1}\,\sup_{t\in\RE}\left\|\frac{\partial v\nn}{\partial x}(t,\cdot)\right\|_{L^2(\RE)}+(t_2-t_1)\,\sup_{(t,x)\in\RE}\left|\frac{\partial v\nn}{\partial t}(t,x)\right|\\
\le &\sqrt{x_2-x_1}\,\sup_{t\in\RE}\left\|v\nn (t,\cdot)\right\|_{H^1(\RE)}+c_1
(t_2-t_1)\,\sup_{t\in\RE}\left\|\frac{\partial v\nn}{\partial t}(t,\cdot)\right\|_{H^1(\RE)}\,.
\end{align*}
Hence by \eqref{derti} the sequence $\{v\nn\}_1^\infty$ is equi-bounded and equi-continuous. Thus, by Ascoli-Arzela theorem, there exist a subsequence $\{v^{(n_{k})}\}_1^\infty$ and $v\in C(\RE^2)$ such that $v^{(n_{k})}$ converges to $v$ uniformly over compact subsets of $\RE^2$.
\par Let us now take a bounded rectangle $R=I_1\times I_2\subset\RE^2$. 
Since the injection
$H^1(I_2)\hookrightarrow C(\bar I_2)$ is a compact operator and, by \eqref{derti}, 
$\left\{\frac{\partial v^{(n_{k})}}{\partial t}\right\}_1^\infty$ is bounded in 
$L^\infty(I_1; H^1(I_2))$ and $\left\{\frac{\partial^2 v^{(n_{k})}}{\partial t^2}\right\}_1^\infty$ is bounded in $L^\infty(I_1;L^2(I_2))$, by corollary 4  in \cite{JS},  
section 8, applied to the triple of Banach spaces $X=H^1(I_2)\subset B=C(\bar I_2)\subset Y=L^2(I_2)$,  
there exists a sub-subsequence $ \left\{\frac{\partial v^{(n^R_{i})}}{\partial t}\right\}_1^\infty$ converging to $\dot v_R$ in $C(\bar I_1;C(\bar I_2))$. 
\par Let $H^\alpha(I_2)$, $\alpha\in(0,1)$, be defined, as usual, by
$$
H^\alpha(I_2):=\left\{f\in L^2(I_2)\,:\, \int_{I_2}\int_{I_2}\frac{|f(x)-f(y)|^2}{|x-y|^{1+2\alpha}}\,dxdy<\infty\right\}\,.
$$
$H^\alpha(I_2)$ is a Banach space with norm 
$$
\|f\|_{H^\alpha(I_2)}=\left(\|f\|^2_{L^2(I_2)}+\int_{I_2}\int_{I_2}\frac{|f(x)-f(y)|^2}{|x-y|^{1+2\alpha}}\,dxdy\right)^{1/2}\,.
$$
Notice that $\sgn(\cdot-s_j\nn)\in H^\alpha(I_2)$ for any $\alpha<1/2$ and for any bounded $I_2\subset\RE$ and its $H^\alpha(I_2)$ norm is bounded uniformly in $n$. Hence, by \eqref{dervreg}, \eqref{reg}, \eqref{bound4} and $\mathcal A_2$),  $\left\{\frac{\partial v^{(n_{k})}}{\partial x}\right\}_1^\infty$ is bounded in 
$L^\infty(I_1; H^\alpha(I_2))$, $\alpha<1/2$. Moreover, by \eqref{derti}, $\left\{\frac{\partial}{\partial t}\frac{\partial v^{(n_{k})}}{\partial x}\right\}_1^\infty$ is bounded in $L^\infty(I_1;L^2(I_2))$. Thus, since the injection
$H^\alpha (I_2)\hookrightarrow L^2( I_2)$, $\alpha>0$, is a compact operator, 
again by corollary 4  in \cite{JS},  
section 8, applied to the triple of Banach spaces $X=H^\alpha(I_2)\subset B=Y=L^2(I_2)$,  
there exists a sub-subsequence $ \left\{\frac{\partial v^{(n^R_{j})}}{\partial x}\right\}_1^\infty$ converging to $v'_{R}$ in $C(\bar I_1;L^2(I_2))$. 
\par
Considering then the sequence of rectangles $R_m=(-m,m)\times(-m,m)$, by a standard 
diagonal argument, one obtains a further subsequence, which for notational convenience we  continue to denote by $\{v^{(n_k)}\}_1^\infty$, such that $ \left\{\frac{\partial v^{(n_{k})}}{\partial t}\right\}_1^\infty$ converges to $\dot v\in C(\RE^2)$ uniformly over compact sets, and such that $ \left\{\frac{\partial v^{(n_{k})}}{\partial x}\right\}_1^\infty$ 
converges to $v'\in C(\RE;L_{loc}^2(\RE))$ uniformly over compact time intervals. 
Defining $$v_1(t,x):=v(0,x)+\sgn(t)\int_0^t\dot v(s,x)\,ds\,,\quad v_2(t,x):=v(t,0)+\sgn(x)\int_0^x v'(t,y)\,dy\,,
$$ one obtains $v^{(n_k)}\to v_1$ and  $v^{(n_k)}\to v_2$, and so $
v_1=v_2=v$, $\dot v=\frac{\partial v}{\partial t}$, $v'=\frac{\partial v}{\partial x}$.
\par
Let us now multiply both sides of equation 
\eqref{tilde} by a function 
$g\in C_c^{1}(\RE^2)$. 
Integrating by part we get
\[
\begin{aligned}
&-\int_0^\infty\left\langle\frac{\partial  v^{(n_k)}}{\partial t}(t,\cdot), \frac{\partial  g}{\partial t}(t,\cdot)\right\rangle\,dt
-\left\langle\frac{\partial v^{(n_k)}}{\partial t}(0,\cdot),g(0,\cdot)
\right\rangle
=
-a^2\int_0^\infty\left\langle\frac{\partial v^{(n_k)}}{\partial x}(t,\cdot),\frac{\partial g}{\partial x}(t,\cdot)\right\rangle\,dt\\ 
+&\frac{1}{\rho_0S}\int_0^\infty \sum_{j=1}^{n_k}M_j^{(n_k)}\,
\frac{dz_j^{(n_k)}}{dt}\,
(t)\,\frac{\partial g}{\partial t}(t,s_j^{(n_k)})\,dt-\frac{1}{\rho_0S}\int_0^\infty \sum_{j=1}^{n_k}K_j^{(n_k)} z_j^{(n_k)}(t) g(t,{s_j^{(n_k)}})\,dt\,.
\end{aligned}
\]
Using assumption $\mathcal{A}_2)$ and $\mathcal{A}_3)$  we conclude that
\[
\begin{aligned}
&\lim_{k\to\infty}\frac{1}{\rho_0S}\int_0^\infty \sum_{j=1}^nM_j^{(n_k)}\,
\frac{dz_j^{(n_k)}}{dt}\,
(t)\frac{\partial g}{\partial t}(t,s_j^{(n_k)})\,dt
\\
=&\lim_{k\to\infty}\frac{1}{\rho_0S}\int_0^\infty \sum_{j=1}^nM_j^{(n_k)}\,
\left(\frac{\partial v^{(n_k)}}{\partial t}\,(t,s_j^{(n_k)})
-\frac{\partial v}{\partial t}\,(t,s_j^{(n_k)})\right)\,\frac{\partial g}{\partial t}(t,s_j^{(n_k)})\,dt\\
&+\lim_{k\to\infty}\frac{1}{\rho_0S}\int_0^\infty \sum_{j=1}^nM_j^{(n_k)}\,
\frac{\partial v}{\partial t}(t,s_j^{(n_k)})\,\frac{\partial g}{\partial t}(t,s_j^{(n_k)})\,dt\\
=&\int_0^\infty\left\langle
\frac{\rho_M}{\rho_0}\,\frac{\partial v}{\partial t}(t,\cdot),\frac{\partial g}{\partial t}(t,\cdot)\right\rangle\,dt
\end{aligned}
\]
and
\[
\begin{aligned}
&\lim_{k\to\infty}
\frac{1}{\rho_0S}\int_0^\infty \sum_{j=1}^{n_k}K_j^{(n_k)} z_j^{(n_k)}(t) g(t,{s_j^{(n_k)}})\,dt
\\
=&\lim_{k\to\infty}\frac{1}{\rho_0S}\int_0^\infty \sum_{j=1}^{(n_k)}\,
(v^{(n_k)}(t,s_j^{(n_k)})-v(t,s_j^{(n_k)}))\,g(t,s_j^{(n_k)})\,dt\\
&+\lim_{k\to\infty}\frac{1}{\rho_0S}\int_0^\infty \sum_{j=1}^nK_j^{(n_k)}\,
v(t,s_j^{(n_k)})\,g(t,s_j^{(n_k)})\,dt\\
=&\int_0^\infty\left\langle
\frac{\rho_K}{\rho_0}\,v(t,\cdot),g(t,\cdot)\right\rangle\,dt\,.
\end{aligned}
\]
The limit function $v$ then satisfies the equation (notice that $\frac{\partial v}{\partial t}(0,\cdot)$ and $\rho_M$ have disjoint supports)
\[ 
\begin{aligned}
&- \int_0^\infty\left\langle\Big(1+\frac{\rho_M}{\rho_0}\Big)\frac{\partial v}{\partial t}(t,\cdot),\frac{\partial g}{\partial t}(t,\cdot)\right\rangle\,dt -\left\langle\Big(1+\frac{\rho_M}{\rho_0}\Big)\frac{\partial v}{\partial t}(0,\cdot),\frac{\partial g}{\partial t}(0,\cdot)\right\rangle\\
 =&-a^2\int_0^\infty\left\langle\frac{\partial v}{\partial x}(t,\cdot),\frac{\partial g}{\partial x}(t,\cdot)\right\rangle\,dt
-\int_0^\infty\left\langle
\frac{\rho_K}{\rho_0}\,v(t,\cdot),g(t,\cdot)\right\rangle\,.
\end{aligned}
\]
which is a weak form of equation \eqref{hypeq} when the initial data $v_0$ is specified.
\par
Posing $w:=1+\frac{\rho_M}{\rho_0}$ and $q:=\frac{\rho_K}{\rho_0}$,  let us define the weighted $L^2$ space
$$
L^2(\RE;w):=\{f:\RE\to\CO\ \text{measurable}\,:\,f\sqrt w\in L^2(\RE) \}
$$
and the linear operator
$$
L:D(L)\subseteq L^2(\RE;w)\to L^2(\RE;w)\,,\qquad  Lf:=\frac{1}{w}\,
\left(-a^2\,\frac{d^2f}{dx^2}+qf\right)\,,
$$
$$
D(L):=\left\{f\in L^2(\RE;w)\,:\, f\in C^1(\RE)\,,\ \frac{df}{dx}\in \text{AC}(\RE)\,,\quad Lf\in L^2(\RE;w)\right\}\,.
$$
Here $\text{AC}(\RE) $ denotes the set of absolutely continuous functions on the real line. 
Notice that $L$ is in the limit point case at both  $+\infty$ and $-\infty$ (use e.g. Theorem 6.3 in \cite{W}) and so, by  the theory of Sturm-Liouville operators (see e.g. \cite {W}), $L$ is 
self-adjoint (and positive). Hence, by the theory of abstract second-order equations (see e.g. \cite{G}, Chapter 2, Section 7), the Cauchy problem 
\begin{align*}
&w\frac{\partial^2v }{\partial t^2}=a^2\frac{\partial^2 v}{\partial x^2}-qv\\
&v|_{t=0}=v_1\in D(L)\,,\quad \frac{\partial v}{\partial t}\bigg|_{t=0}=v_2\in D(\sqrt L)
\end{align*}
has an unique strong solution 
\be\label{strong} 
v\in C^2(\RE;L^2(\RE;w))\cap C^1(\RE;D(\sqrt L))\cap 
C(\RE;D(L))\,.
\ee
Since, by hypotheses ${\mathcal A}_2$) and ${\mathcal A}_3$), $v_0\in D(L)$ and $\frac{dp_0}{dx}\in D(\sqrt L)$, the regularity of the limit function $v$ can be precised further: 
it is the unique strong solution of Cauchy problem \eqref{hypeq}-\eqref {hypeq2} and satisfies  \eqref{strong}. 
\par 
Suppose now that the whole sequence $\{v^{(n)}\}_{1}^{\infty}$ does not converges to $v$. Then there exists a subsequence $\{v^{(n'_k)}\}_{1}^{\infty}$ such that for all $k$ and some bounded rectangle $R'\subset\RE^{2}$,
$$
\sup_{(t,x)\in R'}|v^{(n'_k)}(t,x)-v(t,x)|\ge \epsilon>0
$$ 
(analogous relations hold for $\frac{\partial v^{(n'_{k})}}{\partial t}$ and $\frac{\partial v^{(n'_{k})}}{\partial x}$). Since $\S^{(n'_{k})}$, $\underline{M}^{(n'_{k})}$ and $\underline{K}^{(n'_{k})}$ satisfy the assumptions $\mathcal{A}_1)$ and $\mathcal{A}_2)$ above, we get a contradiction and so the whole sequence converges to $v$.
\end{proof}
\vspace{.7cm}
\n
We want to conclude the paper with few short remarks on the effective equation (\ref{hypeq}) we obtain in the limit. 

\n
It is worth stressing that (\ref{hypeq}) is not simply a wave equation with space-dependent propagation speed $\displaystyle \frac{a \sqrt{\rho_0}}{\sqrt{\rho_0 + \rho_M}}\,$ taking into account the masses of the oscillators.

\n
In fact, the potential term $\displaystyle \frac{\rho_{K}}{a^2 \rho_0}$ describes a source, depending linearly on the velocity field, due to the interaction of the field with the oscillators. In this respect, the problem we addressed is not a classical one in homogenization of the wave equation  in highly inhomogeneous or random media. On the other hand, non-linear coupling between field and sources can be easily modeled. 

\n
The higher-dimensional analogue of the problem treated in this paper is under investigation. Using techniques of self-adjoint extensions of symmetric operators, as we did in \cite{CFP06}, we intend to make available two and three dimensional models of finite or infinite point emitters interacting via their own-generated acoustic field.


\begin{thebibliography}{10}

\bibitem{Abr1} Abraham M., {\em Prinzipien der Dynamik des Elektrons}, Ann. Physik {\bf 10} (1903), 105--79.

\bibitem{Abr2} Abraham M., {\em Theorie der Elektrizit\"at, vol. II. Elektromagnetische Theorie der Strahlung.} 2nd edition, Teubner (Leipzig, 1908).

\bibitem{C.B}
Bjerknes C.A., {\em Hydrodynamische Fernkr\"afte: f\"unf Abhandlungen \"uber die Bewegung kugelf\"ormiger K\"orper in einer inkompressiblen Fl\"ussigkeit (1863-1880)}, no. 195 in the series Ostwalds Klassiker der exakten Wissenschaften (Leipzig, 1915).

\bibitem{V.B1}
Bjerknes V.F.K.,
{\em Die Kraftfelder}, die Wissenschaft {\bf 28} (Brunswick, 1909).

\bibitem{V.B2}
Bjerknes V.F.K.,  Bjerknes J., Solberg H., Bergeron T.,
{\em Physikalische Hydrodynamik mit Anwendung auf die dynamische Meteorologie} (Berlin, 1933).

\bibitem{CFP06}
Cacciapuoti C., Figari R., Posilicano A.,
{\em Point Interactions in Acoustics: One Dimensional Models}, 
J. Math. Phys.  {\bf47} (2006),  062901  ISSN 0022-2488.

\bibitem{CDMZ}
Cioranescu D., Donato P., Murat F., Zuazua E., {\em Homogenization and corrector for the wave equation in domain with small holes}, Annali della Scuola Normale Superiore di Pisa, Classe di Scienze $4^{e}$ s\'erie, {\bf18} (1991), 251--293.

\bibitem{DZ95}
Doinikov A.A., Zavtrak S.T.,
{\em On the mutual interaction of two gas bubbles in a sound field}, 
Phys. Fluids {\bf 7} (1995), 1923--1930.

\bibitem{D05}
Doinikov A.A., {\em Bjerknes forces and translational bubble dynamics}, Bubble and Particle Dynamics in Acoustic Fields: Modern Trends and Applications (2005) ISBN: 81-7736-284-4. 

\bibitem{DFM}
Donato P., Faella L., Monsurr\`o S., {\em Homogenization of the wave equation in composites with imperfect interface: a memory effect} J. Math. Pures Appl. {\bf 87} (2007),  119--143.

\bibitem{GPX}
Gilbert R.P., Panchenko A., Xie X., {\em A Prototype Homogenization Model for Acoustics of Granular Materials}, International Journal for Multiscale Computational Engineering  {\bf 4} (2006), 585--600.

\bibitem{G} Goldstein J.A., {\it Semigroups of Linear Operators and 
Applications.} Oxford University Press 1985.

\bibitem{HOP01}
Hao Y., Oguz H.N., Prosperetti A.,  {\em The Action of Pressure-Radiation Forces on Pulsating Vapor Bubbles}, Phys. Fluids {\bf 13} (2001),  1167--1177 ISSN 1070-6631.

\bibitem{Lor1} Lorentz H. A.,  {\em La the\'orie \'electromagn\'etique de Maxwell et son application aux corps mouvants}, Arch. Ne\'erl. Sci. Exactes Nat. {\bf 25} (1892), 363--552.

\bibitem{Lor2}  Lorentz H.A., {\em The Theory of Electrons and its Applications to the Phenomena of Light and Radiant Heat}, 2nd edition (1915), reprinted by Dover (New York,1952).

\bibitem{Lor3} Lorentz H.A., {\em Weiterbildung der Maxwellschen Theorie: Elektronentheorie.} Enzyklop\"adie der mathematischen Wissenschaften 5, no. 2 (1904), 145--288.

\bibitem{L12} Louisnard O., {\em A simple model of ultrasound propagation in a cavitating liquid. Part II: Primary Bjerknes force and bubble structures}, Ultrasonics Sonochemistry {\bf 19} (2012), 66--76.

\bibitem{PL}
Panasenko G.P., Lapshinb E.A., {\em Homogenization of high frequency nonlinear acoustics equations}, Applicable Analysis: An International Journal {\bf 74} (2000), 311--331.

\bibitem{P04}
Prosperetti A., {\em Bubbles}, Phys. Fluids {\bf 16} (2004), 1852-1865 ISSN 1070-6631. 

\bibitem{R}
Rohan E., {\em Homogenization of acoustic waves in strongly heterogeneous porous structures}, Wave Motion  {\bf 50} (2013), 1073--1089.

\bibitem{JS} Simon J., {\it Compact sets in the space $L^p(0,T;B)$}, 
Ann. Mat. Pura Appl. (4) {\bf 146} (1987), 65--96.

\bibitem{W} Weidmann J., {\it Spectral Theory of Ordinary Differential Operators.} Springer-Verlag (1987).
\end{thebibliography}
\end{document}